\documentclass[conference]{IEEEtran}
\IEEEoverridecommandlockouts
\usepackage[utf8]{inputenc} 
\usepackage[T1]{fontenc}
\usepackage{ifthen}
\usepackage{cite}
\usepackage{comment}
\usepackage{amsmath,amssymb,amsfonts}
\usepackage{algorithmic}
\usepackage{graphicx}
\usepackage{textcomp}
\usepackage{xcolor}
\usepackage{enumitem}

\newtheorem{theorem}{Theorem}
\newtheorem{definition}{Definition}
\newtheorem{lemma}{Lemma}
\usepackage{graphicx}
\usepackage{nicematrix}

\newcommand{\Adn}{\mathbf{A}_{n,d}}

\begin{document}

\title{Order Optimal Task Allocation in Distributed Computing via Interweaved Cliques}

\author{\IEEEauthorblockN{Javad Maheri, K. K. Krishnan Namboodiri, and Petros Elia}
\IEEEauthorblockA{EURECOM, Sophia Antipolis, France \\
Email: \{maheri, karakkad, elia\}@eurecom.fr}\thanks{This work was supported by the Huawei France-funded Chair towards Future Wireless Networks, by the French government under the France 2030 ANR program “PEPR Networks of the Future” (ref. ANR-22-PEFT-0010), and by European Research Council ERC-StG Project SENSIBILITÉ under Grant 101077361.}
}

\maketitle
\begin{abstract}
We consider a distributed computing system in which a master node coordinates $N$ workers to evaluate a function over $n$ input files, where this function accepts general decomposition. In particular, we focus on the general case where the requested function admits a $d$-uniform decomposition, meaning that it can be decomposed into a set of subfunctions that each depends on a unique $d$-tuple of the $n$ files.
Our objective is to design file and task allocations that minimize the worst-case communication from the master to any worker and the worst-case computational load across workers. 
We first show that the optimal file and task allocation with minimum communication and computation costs admits a natural characterization within combinatorial design theory: it corresponds to a Steiner system $S(t, k, v)$ with $t=d$, $v=n$, and $k \approx \frac{n}{N^{1/d}}$. However, Steiner systems are known to exist only for very restricted parameter regimes.
To overcome this limitation, we propose the information-theoretic-inspired \emph{Interweaved Clique (IC) design}, a universal and deterministic allocation framework that relaxes the strict structure of Steiner systems by allowing slight variations in worker file loads. Although slightly suboptimal, the IC design achieves a communication cost within a constant factor $4e$ from our converse, while also maintaining an order-optimal computation cost, thus allowing this work to derive the fundamental scaling laws of this general distributed computing problem for a large range of parameters.
\end{abstract}


\begin{IEEEkeywords}
Distributed Computing, Coded Distributed Computing, Coded Caching, Distributed Learning, Communication Optimization, Map Reduce, Hypergraph Partitioning, Combinatorial Designs, $t$-Designs, Steiner Systems.
\end{IEEEkeywords}

\section{Introduction}
\label{sec:intro}

The efficient allocation of computational tasks and data is a cornerstone of modern distributed computing, caching, and distributed learning frameworks~\cite{MaN,8963629,bitar2020stochastic, 8007060,vavilapalli2013apache}. Across applications such as large-scale machine learning, covariance matrix estimation, and scientific simulation, system performance is often constrained by the volume of communication required during distributed execution and by the associated computational burden. This challenge has motivated extensive recent work on communication- and computation-efficient distributed function evaluation. A prominent line of research studies coded distributed computing frameworks, beginning with Coded MapReduce~\cite{LMYA} and extending to variants that address stragglers, heterogeneity, and network topology~\cite{CCW,WaW,BWW,PNR,BrEl,WSJTC,YaJ,MaT}, demonstrating that structured data placement and task allocation can substantially reduce communication through coded exchanges. Another set of works focuses on linearly separable functions and straggler resilience~\cite{WSJC,WSJC2}, multi-user architectures and task assignment using covering and tiling constructions~\cite{KhE,KhE2}, and worst-case communication minimization under task constraints~\cite{NPME}. All the above lines of research share the goal of designing, under various settings and assumptions, task and data assignment methods that reduce communication and computation costs in distributed computing, and in certain settings, highlight inherent interactions between these two resources.

Motivated by the same need for efficient task and data allocation, we here consider a general coded distributed computing framework for computing decomposable functions in distributed systems. In our framework, the function is decomposable into multiple subfunctions, each taking as input a different $d$-tuple of files, allowing the master node to assign collections of subfunctions (each represented here by a $d$-tuple) to multiple workers and communicates the necessary file inputs so as to enable local computation. Naturally, this setting entails a communication cost (as servers need to be communicated the necessary files), and a computation cost (which scales with the number of subfunctions each server must compute). The central design problem is to jointly determine the task assignment and file placement strategies that minimize communication while maintaining balanced computational loads across workers. Unlike formulations tailored to specific computation pipelines, our framework models the distributed evaluation of general \(d\)-tuple decomposable functions by explicitly characterizing the interaction between task partitioning and file placement. The resulting problem is inherently combinatorial in nature. Accordingly, we seek solutions based on structured combinatorial constructions that minimize both communication and computation costs.

Indeed, combinatorial designs have been widely used in prior coded distributed computing frameworks, particularly in MapReduce-based models, where clique covers based on \(t\)-designs and related combinatorial structures guide task and data assignment~\cite{CDC_Design1,CDC_Design2,CDC_Design3,CDC_Design4, maheri2025constructing}. While these approaches have been effective in reducing communication, they are largely specialized to MapReduce-style computation pipelines. In contrast, our work develops combinatorial constructions for a more general \(d\)-tuple subfunction model, enabling distributed computation beyond the MapReduce paradigm.

\emph{Notations:} We represent \(d\)-tuples using bold lowercase letters, such as \(\mathbf{a} = \{a_1, a_2, \dots, a_d\}\). Sets of \(d\)-tuples are denoted by bold uppercase letters, such as \(\mathbf{\Phi}\). We use \([n]\) to denote the set \(\{1,2,\ldots,n\}\). For any set \(\mathcal{S}\), \(\binom{\mathcal{S}}{d}\) denotes the set of all \(d\)-sized subsets of \(\mathcal{S}\). We use \(\mathbf{A}_{n,d}\) to denote \(\binom{[n]}{d}\). For positive integers \(a\) and \(b\), \(a \mid b\) indicates that \(a\) divides \(b\). Finally, we write \(f(n)\asymp g(n)\) if there exist constants \(c_1, c_2\), and \(n_0\) such that for all \(n \ge n_0\), \(c_1 g(n) \le f(n) \le c_2 g(n)\).

\section{System Model and Problem Statement}
\label{sec:system_model}
We consider a distributed computing system consisting of \(N\) worker nodes (servers) and a master node that coordinates the computation of a desired function of \(n\) input files (as shown in Fig. \ref{SystemModel}). The master has access to a library of files
\(
\mathcal{W}=\{W_1,\ldots,W_n\},
\)
where each file \(W_j\in\mathbb{F}^B\) contains \(B\) symbols over a field
\(\mathbb{F}\). The desired function
\(
F:(\mathbb{F}^{B})^n \rightarrow \mathbb{F}^L
\)
is assumed to be decomposable into \(\binom{n}{d}\) subfunctions, each depending on a unique subset of \(d\) files. Accordingly, any such decomposition can be written as
\begin{equation} \label{generalFunction1}
\Psi\big(\{\zeta_\mathcal{T}(\mathcal{W}_\mathcal{T}):\mathcal{T}\in
\binom{[n]}{d}\}\big):
(\mathbb{F}^T)^{\binom{n}{d}} \rightarrow \mathbb{F}^L
\end{equation}
where \(\Psi\) is an aggregation function and each subfunction
\(\zeta_\mathcal{T}:(\mathbb{F}^B)^d \rightarrow \mathbb{F}^T\) operates on the set of files \(\mathcal{W}_\mathcal{T}=\{W_j:j\in\mathcal{T}\}\). The parameter \(d\) is
referred to as the \emph{subfunction file degree}.

\emph{Task and File Allocation:} The master assigns to each worker \(b\in[N]\) a set of subfunctions \(\mathbf{\Phi}_b \subseteq \mathbf{A}_{n,d}\) to compute. The collection
\(
\mathbb{P}\triangleq\{\mathbf{\Phi}_1,\ldots,\mathbf{\Phi}_N\}
\)
forms a partition of \(\mathbf{A}_{n,d}\),
\begin{equation}\label{eq: partition}
\bigcup\nolimits_{b=1}^N \mathbf{\Phi}_b = \mathbf{A}_{n,d}, \qquad
\mathbf{\Phi}_b \cap \mathbf{\Phi}_{b'}=\emptyset \;\; \text{for } b\neq b'.
\end{equation}
To compute the subfunctions in \(\mathbf{\Phi}_b\), worker \(b\) must receive all files indexed by the union of the \(d\)-tuples in \(\mathbf{\Phi}_b\). Let
\begin{equation}\label{eq: alpha}
\alpha(\mathbf{\Phi}_b)=\bigcup\nolimits_{\mathcal{T}\in\mathbf{\Phi}_b}\mathcal{T}
\subseteq[n]
\end{equation}
denote the set of required file indices. The master sends the set of files \(\mathcal{W}^{(b)}=\alpha(\mathbf{\Phi}_b)\) to worker \(b\). The communication cost is defined as
\begin{equation}\label{eq: pi}
\pi=\max_{b\in[N]}|\alpha(\mathbf{\Phi}_b)|
\end{equation}
which captures the bottleneck communication load across workers.

\emph{Computing Phase:} During the computing phase, each worker \(b\) computes all subfunctions
\(\zeta_\mathcal{T}(\mathcal{W}_\mathcal{T})\) for
\(\mathcal{T}\in\mathbf{\Phi}_b\). Assuming identical computational capability across workers and equal cost per subfunction, the computation time is proportional to the maximum number of subfunctions assigned to any worker. Accordingly, the computation cost is defined as
\begin{equation}\label{eq: delta}
\delta=\frac{\max_{b\in[N]}|\mathbf{\Phi}_b|}
{\lceil \binom{n}{d}/N\rceil}
\end{equation}
where the denominator corresponds to the ideal uniform assignment achieving the minimum possible computation delay.

\begin{figure}
\begin{center}   
\includegraphics[width=8cm]{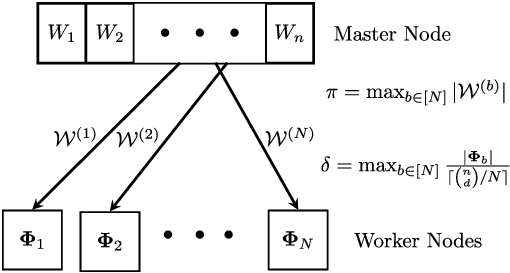}
\caption{Distributed computing model.
The set of files communicated to worker \(b\) is denoted with \(\mathcal{W}^{(b)}=\alpha(\mathbf{\Phi}_b)\), while \(\pi\) denotes the maximum communication cost across the master--worker links under the assumption of parallel, uniform-capacity links. Similarly, the set of subfunctions assigned for computation to worker \(b\) is denoted by \(\mathbf{\mathbf{\Phi}}_{b}\), while \(\delta\) denotes the computational delay normalized by the minimum possible computational delay, assuming homogeneous workers.}
\label{SystemModel}
\end{center}
\vspace{-0.5cm}
\end{figure}
We define \(\pi^\star\) as the minimum communication cost over all valid allocation schemes. Our objective is to design a partition \(\mathbb{P}=\{\mathbf{\Phi}_b\}_{b=1}^N\) that minimizes the communication cost \(\pi\) while ensuring that the computation cost \(\delta\) remains close to unity, for given \(n,d,\) and \(N\).

Our approach, and specifically the \(d\)-tuple based decomposability model we proposed, is motivated by the many practically relevant functions that require aggregating interactions over a large fraction of subfunctions - and which thus entail interactions over the \(d\)-tuples in \(\mathbf{A}_{n,d}=\binom{[n]}{d}\), the collection of all subsets of \([n]\) of size \(d\). Canonical examples include covariance and correlation computations involving all pairwise dependencies (\(d=2\))~\cite{ledoit2004well}, higher-order cumulant estimation with \(d>2\) interactions~\cite{comon1994independent}, and kernel matrix construction in kernel methods, which requires evaluating similarities over all pairs of data points~\cite{scholkopf1998nonlinear}. Similar dense interaction patterns arise in particle and molecular dynamics simulations, where forces are computed over all particle pairs or higher-order groups~\cite{Dhaliwal2022RandomFeaturesInteratomic,rahimi2007random}, as well as in exhaustive SNP--SNP interaction analysis in genomics~\cite{li2015overview}. 
In all these settings, the final output depends on a very large collection of \(d\)-tuple subfunctions, making scalable distributed computation a central challenge.

\subsection{The Combinatorial Perspective: Steiner Systems}

The set of all \(d\)-tuples (the subfunctions) can be represented as the edge set of a complete \(d\)-uniform hypergraph on vertex set \([n]\). Consequently, assigning subfunctions to \(N\) servers is equivalent to partitioning the hyperedges of this complete hypergraph into \(N\) groups.
From combinatorial design theory, one natural solution to this hyperedge partitioning problem is a \textit{Steiner system}, denoted by \(S(t,k,v)\).
\begin{definition}[Steiner System]
A Steiner system \(S(t,k,v)\) consists of a set of \(v\) points and a collection of blocks, each of size \(k\), such that every subset of \(t\) points is contained in exactly one block.
\end{definition}

Directly from the definition, the following lemma follows.

\begin{lemma}
\label{lem:steiner}
Let \(S(t,k,v)\) be a Steiner system. Consider a distributed computing setting with \(n=v\) files and subfunction degree \(d=t\), and let \(N=\binom{n}{d}/\binom{k}{d}\). Then there exists an assignment of files and subfunctions to \(N\) workers such that each worker is communicated with
\(k\) files and is assigned exactly \(\binom{k}{d}\) subfunctions. Consequently, the communication cost \(\pi=k\) and the computation cost \(\delta=1\) are achievable.
\end{lemma}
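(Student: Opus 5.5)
We will build the assignment directly from the blocks of the Steiner system. Let $\mathcal{B}=\{B_1,\ldots,B_M\}$ be the blocks of $S(d,k,n)$ on point set $[n]$, each $B_b\subseteq[n]$ with $|B_b|=k$. The first step is to count the blocks by double counting pairs $(\mathcal{T},B)$ with $\mathcal{T}\in\Adn$ and $\mathcal{T}\subseteq B\in\mathcal{B}$. Each of the $\binom{n}{d}$ $d$-subsets lies in exactly one block, and each block contains exactly $\binom{k}{d}$ $d$-subsets. Hence $M\binom{k}{d}=\binom{n}{d}$, so $M=N$ and the blocks can be indexed by the workers $b\in[N]$.

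The second step is to define $\mathbf{\Phi}_b=\binom{B_b}{d}$, the set of all $d$-subsets of the $b$-th block. We must verify that $\{\mathbf{\Phi}_b\}$ is a partition of $\Adn$ as required in \eqref{eq: partition}. Every $\mathcal{T}\in\Adn$ lies in some block, which gives the covering property. It lies in exactly one block, so no $\mathcal{T}$ belongs to two distinct $\mathbf{\Phi}_b$, which gives disjointness. If the design happened to have repeated blocks, the uniqueness condition would make them contain the same $d$-sets, contradicting "exactly one"; so repeated blocks cannot occur. (This uses $k\ge d$, which is implicit since blocks must contain $d$-sets.)

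The third step computes the two costs.
\begin{itemize}
\item \emph{Communication.} For each $b$ we have $\alpha(\mathbf{\Phi}_b)=\bigcup_{\mathcal{T}\subseteq B_b}\mathcal{T}=B_b$, because every point of $B_b$ lies in some $d$-subset of $B_b$. So $|\alpha(\mathbf{\Phi}_b)|=k$, and by \eqref{eq: pi} we get $\pi=k$.
\item \emph{Computation.} Each $|\mathbf{\Phi}_b|=\binom{k}{d}$, and $\binom{n}{d}/N=\binom{k}{d}$ is an integer, so the ceiling in \eqref{eq: delta} is exact and $\delta=1$.
\end{itemize}

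No step is a real obstacle; the only points needing care are the double-counting identity that makes $N$ equal the number of blocks, and checking $\alpha(\mathbf{\Phi}_b)=B_b$ rather than a proper subset.
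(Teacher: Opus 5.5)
Your proposal is correct and follows the same route as the paper: take the blocks $\mathcal{S}_b$ of $S(d,k,n)$, set $\mathbf{\Phi}_b=\binom{\mathcal{S}_b}{d}$, and use the exactly-one property to get a partition of $\mathbf{A}_{n,d}$ with $\pi=k$ and $\delta=1$. You also make explicit three points the paper leaves implicit: the double-counting identity showing that the number of blocks equals $N$, the check that $\alpha(\mathbf{\Phi}_b)=\mathcal{S}_b$, and the observation that the ceiling in the definition of $\delta$ is exact.
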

\begin{IEEEproof}
Let \(\mathcal{S}_1,\ldots,\mathcal{S}_N\) denote the blocks of \(S(d,k,n)\). Assign worker \(b\) the subfunction set \(\mathbf{\Phi}_b=\binom{\mathcal{S}_b}{d}\), and communicate to it the \(k\) files indexed by \(\mathcal{S}_b\). Since every
\(d\)-subset of \([n]\) is contained in exactly one block, the sets \(\{\mathbf{\Phi}_b\}_{b=1}^N\) form a partition of \(\mathbf{A}_{n,d}\), and each worker is assigned \(\binom{k}{d}\) subfunctions. Hence, we have \(\pi=k\) and \(\delta=1\).
\end{IEEEproof}

However, the applicability of Steiner systems is severely limited by their stringent existence requirements. A Steiner system \(S(d,\pi,n)\) can exist only if the divisibility conditions that \(\binom{n-i}{d-i}\) is divisible by
\(\binom{\pi-i}{d-i}\) for all \(0 \le i\le d-1\) are satisfied, and even when these necessary conditions hold, existence is guaranteed only for restricted parameter regimes. Consequently, for most choices of \(n\), \(d\), and \(N\), no Steiner system exists. This strong structural rigidity makes Steiner systems unsuitable as a general design tool for distributed computing systems in which \(n\) or \(N\) may vary freely.

\section{Main Results}

To overcome the non-existence of Steiner systems for most parameter regimes, we propose the \emph{Interweaved Clique (IC) design}, a relaxed combinatorial framework that provides a constructive and broadly applicable solution for a wide range of system parameters $(n, N, d)$. The relaxation allows a slight variation in the number of tasks assigned to each worker, while preserving a deterministic structure based on interweaved cliques. Unlike prior appearances of clique-based constructions in information-theoretic problems such as coded caching~\cite{MaN}, where cliques represent user side-information structures, in this work cliques serve as the combinatorial seed for constructing the worker--file--task allocation. Specifically, the global allocation is built over $\mathbf{A}_{n,d}$ using a smaller seed structure $\mathbf{A}_{f,d}$, where $f$ can be much smaller than $n$. Similar two-level complete-set formulations have also appeared in other coding-theoretic and information-theoretic contexts~\cite{MKR,BrEl2,BrEl,8437333,NaR,PNR2,EngEli2017,ZhaoBazcoElia2023}.\\
The proposed IC design enjoys the following key properties:
\subsubsection{\textbf{Universality}}{\setlength{\parindent}{0pt}A valid partition exists for a wide range of parameters $(n, N, d)$.
\subsubsection{\textbf{Order-Optimality}}The proposed scheme achieves $\pi \leq 4e\,n/N^{1/d}$ and therefore achieves a gain in communication cost that scales as $N^{1/d}$, where $e$ denotes Euler's constant. Moreover, the scheme is order-optimal, as its communication cost is within a constant factor $4e$ of the information-theoretic lower bound obtained from a packing argument.


Let us now formally state the performance guarantees of the IC design, emphasizing the communication cost $\pi$.

\begin{theorem}[Optimal Communication Cost]
\label{th: main}
For a distributed computing system with $n$ files and $N \le (\frac{9}{10}\sqrt{\frac{n}{d}})^d$ workers, the IC design achieves a communication cost $\pi$ satisfying
\[
  \pi\le \frac{4e \cdot n}{N^{1/d}}.
\]
Furthermore, the scaling law $\pi \asymp \frac{n}{N^{1/d}}$ is optimal.
\end{theorem}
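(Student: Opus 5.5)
The plan has two parts: a converse from a packing argument, and an achievability bound from an explicit construction.

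\emph{Converse.} Take any valid partition $\{\mathbf{\Phi}_b\}$. Since $\mathbf{\Phi}_b\subseteq\binom{\alpha(\mathbf{\Phi}_b)}{d}$, we have $|\mathbf{\Phi}_b|\le\binom{\pi}{d}$. Summing over workers gives $\binom{n}{d}\le N\binom{\pi}{d}\le N\pi^d/d!$. Using $\binom{n}{d}\ge (n/d)^d$ and $d!\ge (d/e)^d$, this yields $\pi\ge n/(eN^{1/d})$, or more simply $\pi\ge n/N^{1/d}$ up to an absolute constant. Once achievability is in hand, this gives $\pi\asymp n/N^{1/d}$, and it also explains why the constant $4e$ is measured against this packing bound.

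\emph{Achievability via IC design.} The idea is a two-level construction. Choose an integer $f$ with $\binom{f}{d}\le N$, as large as possible, so that $f\approx (d!\,N)^{1/d}\ge d\,N^{1/d}/e$. Partition $[n]$ into $f$ groups $G_1,\dots,G_f$ of sizes $\lceil n/f\rceil$ or $\lfloor n/f\rfloor$. Every $d$-tuple $\mathcal{T}\subseteq[n]$ has a \emph{type}: the set of groups it meets, which has $j\le d$ elements. The seed is $\mathbf{A}_{f,d}$. Assign to each seed clique $\mathbf{s}\in\mathbf{A}_{f,d}$ the file set $\bigcup_{i\in\mathbf{s}}G_i$, which has size about $dn/f\le e\,n/N^{1/d}$. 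A $d$-tuple whose type has size $j<d$ lies in several such cliques. To make the assignment a partition, pick for it one canonical superset of its type, for instance the lexicographically smallest, or better a rule that balances the load.

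The $N$ workers are then split among the $\binom{f}{d}\le N$ cliques. Giving each clique roughly $N/\binom{f}{d}$ workers and splitting its tuples evenly among them keeps the computation balanced without increasing any worker's file set. The interweaving is what distributes tuples with $j<d$ evenly across cliques, and the loss in rounding $f$ down gives at most a factor $2$ in $N^{1/d}$ scaling.

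\emph{Bounding the constant and the condition on $N$.} First, rounding gives $\lceil n/f\rceil\le 2n/f$ provided $f\le n$. Second, $f\ge \tfrac{1}{2}(d!N)^{1/d}$ after taking the floor, using $f\ge d$. Combining these gives $\pi\le d\cdot 2n/f\le 4n\,d/(d!N)^{1/d}\le 4e\,n/N^{1/d}$, using $(d!)^{1/d}\ge d/e$. The hypothesis $N\le(\tfrac{9}{10}\sqrt{n/d})^d$ is what makes this go through: it guarantees $f\lesssim d N^{1/d}\le \sqrt{nd}\le n$, so the groups are nonempty with room for the ceiling. It also presumably controls the balance of the lower-type tuples needed for $\delta=O(1)$.

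The main obstacle I expect is computation balance, $\delta=O(1)$. Tuples with $j<d$ must be spread so that no clique receives far more than $\binom{n}{d}/\binom{f}{d}$ of them. My first attempt would be a counting bound: show that the tuples with $j=d$ dominate when $f$ is large relative to $d^2$, which is where $\sqrt{n/d}$ enters. Then assign the lower types cyclically among their $\binom{f-j}{d-j}$ supersets.
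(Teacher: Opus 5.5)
Your communication bound is correct. On the divisible case ($f\mid n$) and on the $N'\to N$ splitting you coincide with the paper; the difference is how you handle $f\nmid n$.

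The paper keeps all $f$ families of equal size $s_0=\lfloor n/(f+d)\rfloor+1$ and sets aside $g=n-fs_0$ excluded files, which a worker may additionally need, so $\pi\le s_0d+g$. It then gets $s_0d\le 2en/N^{1/d}$, which amounts to $s_0\le n/f$, i.e.\ $g\ge 0$; this is where the hypothesis on $N$ keeps $f$ small relative to $n$. It also shows $g\le nd/(f+d)\le s_0d$. So the paper's $4e$ is the divisible-case $2e$ doubled by the excluded files.

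You instead use families of sizes $\lfloor n/f\rfloor$ and $\lceil n/f\rceil$ with no excluded set, and your factor $2$ comes from $\lceil n/f\rceil\le 2n/f$. Your estimate $f>\tfrac12(d!N)^{1/d}\ge\tfrac{d}{2e}N^{1/d}$ is equivalent to the paper's $f\ge\tfrac de N'^{1/d}$ combined with $N/N'<2^d$.

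Your route is simpler and needs the hypothesis only to ensure $f\le n$. It is even slightly sharper: $\lceil n/f\rceil\le n/f+1$ and $d\le\sqrt{nd}\le\tfrac{9}{10}\,n/N^{1/d}$ give $\pi\le(2e+1)\,n/N^{1/d}$. What the paper's equal-size families buy is that every full-support clique $\mathbf{\Phi}^{(\mathrm{full})}_\sigma$ has exactly $s_0^d$ tuples, which is convenient for load balancing. Note, however, that your argument bounds $\pi$ for your variant, not for the paper's Case~2 design as defined. For that design the step you would still need is precisely the bound $g\le s_0d$, together with $s_0\le n/f$.

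Two smaller points. First, your converse $\pi\ge n/(eN^{1/d})$ is valid and suffices for $\pi\asymp n/N^{1/d}$, but it loses a factor $e$. Paired with your achievability it gives $\pi/\pi^\star\le 4e^2$ rather than $4e$. The paper gets the sharp $\pi^\star\ge n/N^{1/d}$ directly from $\binom{\pi}{d}/\binom{n}{d}=\prod_{i=0}^{d-1}\frac{\pi-i}{n-i}\le(\pi/n)^d$.

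Second, the theorem makes no claim about $\delta$, so what you identify as the main obstacle is not needed here; the paper defers $\delta\le 4$ to its extended version. Likewise, the hypothesis on $N$ is used to keep $f$ small relative to $n$, not to balance the lower-type tuples.
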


\begin{IEEEproof}
The proof of Theorem~\ref{th: main} follows from the IC design presented in Section~\ref{sec:IC_design} and a converse on \(\pi^\star\) presented in Section~\ref{sec:lb}, and is provided in Section~\ref{Proof:Main}.
\end{IEEEproof}
In the end, let us here also note that in the extended version of our work in~\cite{IC}, we also show that this same IC design guarantees, in a very broad setting and with high probability, a near-optimal computation cost of $\delta \le 4.$ 

\section{Achievable Scheme: Interweaved-Cliques Design}
\label{sec:IC_design}


We now describe the Interweaved-Cliques (IC) design, which constructs a partition of $\Adn$ for any given tuple \((n,d,N\leq (\frac{9}{10}\sqrt{\frac{n}{d}})^d )\). The design leverages an intermediate parameter $f$ to structure the file library into \emph{families} and then assigns tasks based on the intersection of these families. The parameter $f$ is choosen as
\begin{equation}
\label{eq: largest r}
    f = \max \big\{ r \in \mathbb{Z}^+ \mid \binom{r}{d} \le N \big\}.
\end{equation}
First, we design the partition of \(\mathbf{A}_{n,d}\) for an intermediate number of groups \(N'\), defined as \(N' \triangleq \binom{f}{d}\). Then, we extend the construction from \(N'\) to \(N\) groups. The construction of \(\mathbf{\Phi}_1, \mathbf{\Phi}_2, \ldots, \mathbf{\Phi}_{N'}\) proceeds in two cases.
\subsection{Case 1: Divisible Parameters ($f\mid n$)}
{\setlength{\parindent}{0pt}
\label{subsec: case1}
Assume $n$ is divisible by $f$, so $s \triangleq n/f$ is an integer.

\subsubsection{File Families}
We partition the $n$ files into $f$ disjoint sets called families, denoted $\mathcal{F}_1, \dots, \mathcal{F}_f$, each containing $s$ files. Specifically, $\mathcal{F}_i$ contains files with indices $\{(i-1)s+1, \dots, is\}$.

\subsubsection{Group Identification}
We create $N' = \binom{f}{d}$ base groups, indexed by the set of $d$-subsets of families, i.e., $\sigma \in {[f]\choose d}$. We denote the set of all base groups as
\begin{equation}
\label{eq: big sigma}
   \mathbf{\Sigma} \triangleq \left\{ \sigma \subseteq [f] : |\sigma| = d \right\}.
\end{equation}
Each group \(\sigma \in \mathbf{\Sigma}\) is allocated the union of files from the families in \(\sigma\). The number of files allocated to each base group is thus
\begin{equation}
\pi = \sum_{i \in \sigma} |\mathcal{F}_i| = s \cdot d = \frac{n \cdot d}{f}.
\end{equation}

\subsubsection{Task Partitioning}
We partition $\Adn$ by assigning each $d$-tuple $\mathbf{a} \in \Adn$ to one of the groups. We define the \textit{support family} of a tuple $\mathbf{a}$ as $\mathcal{B}(\mathbf{a}) \triangleq \{ j \in [f] \mid \mathbf{a} \cap \mathcal{F}_j \neq \emptyset \}$.
\begin{itemize}[leftmargin=*]
    \item \textbf{Full Support Tuples:} We let \( \mathbf{A}_{\mathrm{full}} \triangleq \{ \mathbf{a}\in \mathbf{A}_{n,d} \mid  |\mathcal{B}(\mathbf{a})| = d\}\) represent the so-called set of full support (maximal support) $d$-tuples. If $|\mathcal{B}(\mathbf{a})| = d$, the \(\mathbf{a}\) intersects exactly one file from \(d\) distinct families. For a \(d\)-tuple $\mathbf{a} \in \mathbf{A}_{\mathrm{full}}$, let $  \mathcal{B}(\mathbf{a})=\sigma$. We assign $\mathbf{a}$ to the group $\sigma$ which belongs to $[N']={[f] \choose d}$.  For each $ \sigma \in {[f] \choose d}$, we form its full supports members as follows
\begin{equation}
\label{eqfull}
    \mathbf{\Phi}^{\mathrm{(full)}}_\sigma\triangleq
    \{\mathbf{a}\in \mathbf{A}_{n,d}\ |\ \mathcal{B}(\mathbf{a})=\sigma\}.
\end{equation}
    This forms the \emph{clique} core of the design. 
    \item \textbf{Complement Tuples:} We also let \(\mathbf{A}_{\mathrm{com}}\triangleq\mathbf{A}_{n,d}\backslash \mathbf{A}_{\mathrm{full}}\). For each $\mathbf{a} \in \mathbf{A}_{\mathrm{com}}$, $|\mathcal{B}(\mathbf{a})| < d$. This means, this $d$-tuple is supported by a subset of families $\mathcal{I} \subset [f]$ with $|\mathcal{I}| < d$. 
    The \(\mathbf{a}\) is eligible to be assigned to all \(\sigma \in {[f] \choose d}\) such that \(\mathcal{I} \subset \sigma\).  
   To design partition \(\mathbf{A}_{\mathrm{com}}\) into \(N'\) groups, we consider a partition of \(\mathbf{A}_{n,d}\), which classifies its \(d\)-tuples according to the size of their support family. This partition is as follows
\begin{equation}
    \mathbf{A}_{n,d}=\{\mathbf{C}_{\lceil{\frac{d}{s}}\rceil},\mathbf{C}_{\lceil{\frac{d}{s}}\rceil+1},\dots,\mathbf{C}_d\}
\end{equation}
where for each \(\beta\in [\lceil{\frac{d}{s}}\rceil,d]\), the set
$\mathbf{C}_\beta \triangleq \{\mathbf{a}\in \mathbf{A}_{n,d}\mid |\mathcal{B}(\mathbf{a})|=\beta\}$ represents the set of \(d\)-tuples \( \mathbf{a}\in\mathbf{C}_\beta \) that each intersects exactly \( \beta \) families. Naturally, we have \(\mathbf{C}_d= \mathbf{A}_{\mathrm{full}}\) and \(\mathbf{A}_{\mathrm{com}}=\bigcup_{\beta=\lceil\frac{d}{s}\rceil}^{d-1}\mathbf{C}_\beta\). Let us fix a \(\beta \in [\lceil{\frac{d}{s}}\rceil,d-1]\). Then, for each \( \mathcal{I} \in \binom{[f]}{\beta} \), we define
    \begin{equation}
    \label{eq:C_beta, I}
          \mathbf{C}_{\beta,\mathcal{I}} \triangleq \left\{ \mathbf{a} \in \mathbf{C}_\beta \;\middle|\; \mathcal{B}(\mathbf{a})=\mathcal{I}\right\}.
    \end{equation}
We define \(\mathbf{C}_{\beta,\mathcal{I},\sigma}\) as the subset of \(\mathbf{C}_{\beta,\mathcal{I}}\) allocated to group \(\sigma\). In Appendix~B.C of \cite{IC}, we describe a sequence of steps that leads to the construction of the sets \(\mathbf{C}_{\beta,\mathcal{I},\sigma}\). For each \(\sigma \in \binom{[f]}{d}\), there exist \(\binom{d}{\beta}\) distinct \(\mathcal{I} \in \binom{[f]}{\beta}\) such that \(\mathcal{I} \subset \sigma\). Consequently, 
$\mathbf{\Phi}_{\sigma}^{\mathrm{(com)}}\triangleq\bigcup_{\beta=\lceil \frac{d}{s}\rceil}^{d-1}\bigcup_{\mathcal{I}\subset \sigma}\mathbf{C}_{\beta, \mathcal{I}, \sigma}.$ 
Then, the subfunctions (\(d\)-tuples) allocated to worker \(\sigma\), where \(\sigma \in {[f]\choose d}\), is
 \begin{equation}
     \label{eq:SubfunctionCase1}
     \mathbf{\Phi}_{\sigma}\triangleq\mathbf{\Phi}_{\sigma}^{(\mathrm{full})}\cup \mathbf{\Phi}_{\sigma}^{\mathrm{(com)}}.
 \end{equation}
\end{itemize}


\subsection{Case 2: General Parameters ($f \nmid n$)}
\label{subsec: Case 2}
In this case, we cannot create equal-sized families. We adapt the design by introducing \emph{excluded} files.
Let $s_0 \triangleq \lfloor \frac{n}{f+d} \rfloor + 1$ and define $g \triangleq n - f \cdot s_0$. We treat the last $g$ files as an excluded set $\mathcal{E}$, i.e.,
\begin{equation}
\label{eq: excluded elements}
\mathcal{E}\triangleq [n]/[n-g]=\{n, n-1, \ldots, n-g+1\}. 
\end{equation}
The remaining $n' \triangleq n-g$ files are partitioned into $f$ families of size $s_0$. The partition of $\Adn$ is constructed by
\subsubsection{Step 1} In this step, we apply the Case \ref{subsec: case1} construction to the $n'$ non-excluded files, i.e., $[n']$. Thus, each group \(\sigma \in {[f] \choose d}\) receives from \(\mathbf{A}_{n',d}\)
\begin{equation}
    \label{eq: step 1 portion} 
\mathbf{\Phi}_{\sigma}^{(\mathrm{full})}\cup \mathbf{\Phi}_{\sigma}^{\mathrm{(com)}}.
\end{equation}
\subsubsection{Step 2}  The second step considers the excluded \footnote{For example, for $n=5, n'=4,d=2$, we have that $\mathbf{A}_{\mathrm{exc}} = \big\{\{1,5\},\{2,5\},\{3,5\},\{4,5\}\big\}$ consisting of 4 pairs.}\(d\)-tuples 
\begin{equation}
\label{eq:A_dis}
\mathbf{A}_{\mathrm{exc}} \triangleq \mathbf{A}_{n,d} \setminus \mathbf{A}_{n',d}.
\end{equation}
Thus, we aim to distribute the $\mathbf{A}_{\mathrm{exc}}$ that contain excluded files (from $\mathcal{E}$) into the groups \(N'={ f \choose d}\) formed by their non-excluded elements (see Section \ref{subsec: case1}). 
Any \(d\)-tuple \(\mathbf{t} \in \mathbf{A}_{\mathrm{exc}}\) will have an arbitrary number \(m_{\mathbf{t}}=|\mathbf{t}  \cap \mathcal{E}|\) of components/elements from the excluded file-index set $\mathcal{E}$, and it will have  \(d-m_{\mathbf{t}} = |\mathbf{t}  \cap [n']|\) elements from the rest. 
It is easy to see that  \( m_{\mathbf{t}}\in [1, \min\{d,g\}]\) and thus that \(d-m_{\mathbf{t}} \in [\max\{d-g,0\},d-1]\).  Whenever there is no ambiguity, we will henceforth revert to the simpler notation \(m\) instead of $ m_{\mathbf{t}}$.  
For every \(m\in [1, \min\{d,g\}]\), we define the set
 \begin{equation} 
    \label{eq:R_m,beta}
    \mathbf{R}_{m,\beta} \triangleq \left\{ \mathbf{t} \in \mathbf{A}_{\mathrm{exc}} \;\middle|\; |\mathcal{B}(\mathbf{t})| = \beta  ,\ |\mathbf{t}  \cap \mathcal{E}|=m \right\}
    \end{equation}
which describes the \(d\)-tuples \( \mathbf{t} \) that intersect exactly \( \beta \) families and contain \(m\) excluded elements from $\mathcal{E}$. Notice that \(\beta\) can take values in the range \(\big[\lceil\frac{d-m}{s_0}\rceil, d-m\big]\). If \(m=d\leq g\), then \(\beta =0\), which means that all the entries of \(\mathbf{t}\) are from \(\mathcal{E}\). Let us now partition \(\mathbf{A}_{\mathrm{exc}}\) as follows
\( \mathbf{A}_{\mathrm{exc}}=\bigcup_{m=1}^{\min\{d, g\}}\bigcup_{\beta=\lceil\frac{d-m}{s_0}\rceil}^{d-m} \mathbf{R}_{m,\beta}. \)
    For each $\mathcal{I}\in\binom{[f]}{\beta}$, let us now define 
 \begin{equation}
\mathbf{R}_{\beta,\mathcal{I}}
 \triangleq \big\{\mathbf{t}\in\mathbf{A}_{\mathrm{exc}}\ \mid \mathcal{B}(\mathbf{t})=\mathcal{I},\ 
 \mathbf{t}\in\bigcup_{m=1}^{\min\{d-\beta,g\}}\mathbf{R}_{m,\beta}\big\}
 \end{equation}
 to be the set of all d-tuples $ \mathbf{t} \in \mathbf{A}_{\mathrm{exc}}$ that intersect exactly all families in $\mathcal{I}$, where in the above, $\mathcal{B}(\mathbf{t})$ denotes the set of families that $\mathbf{t}$ intersects. Let us now also define 
\(  \mathbf{R}_{\beta}\triangleq\bigcup_{\mathcal{I}\in\binom{[f]}{\beta}}\mathbf{R}_{\beta,\mathcal{I}} \subset \mathbf{A}_{\mathrm{exc}} \) to be the set of all excluded \(d\)-tuples that meet exactly $\beta$ families.
Furthermore, directly by applying the established ranges of parameters \(m\) and \(k\), we can conclude that the range of \(\beta \in [\beta_{\mathrm{min}}, \beta_{\mathrm{max}}]\), is defined by \(
    \beta_{\min} \;\triangleq\; \lceil \frac{d-\min\{d,g\}}{s_0} \rceil
            \;=\;\lceil \frac{\max\{0,d-g\}}{s_0} \rceil
\)
and $\beta_{\max} \;\triangleq\; d-1.$

   Our next step involves going through the range of $\beta$. For each \(\beta \in [\beta_{\mathrm{min}}, \beta_{\mathrm{max}}]\), we partition each time the set \(\mathbf{R}_{\beta}\) into \( N'={f \choose d} \) groups.  This partitioning is described in detail in Appendix B.E of \cite{IC}. In particular, let us first recall that each group is labeled by a \(\sigma \in {[f] \choose d}\). For each such $\sigma$, there exist \({d \choose \beta}\) different subsets \(\mathcal{I} \subset\sigma\) with cardinality \(\beta\). For each \(\mathcal{I}\subset \sigma\), the set \(\mathbf{R}_{\beta, \mathcal{I}, \sigma}\) collects all \(d\)-tuples in \(\mathbf{R}_{\beta, \mathcal{I}}\) associated to group $\sigma$. We then form the union and define
   \begin{equation}
       \label{eq: phi_exc}
       \mathbf{\Phi}_{\sigma}^{\mathrm{(exc)}}\triangleq\bigcup\nolimits_{\beta=\beta_{\mathrm{min}}}^{\beta_{\mathrm{max}}}\bigcup\nolimits_{\mathcal{I}\subset \sigma}\mathbf{R}_{\beta, \mathcal{I}, \sigma}.
   \end{equation}
Combining \eqref{eq: step 1 portion} and 
\eqref{eq: phi_exc}, we get the subfunctions (\(d\)-tuples) allocated to worker \(\sigma\), where \(\sigma \in {[f]\choose d}\), as follows.
 \begin{equation}
     \label{eq:SubfunctionCase1}
     \mathbf{\Phi}_{\sigma}\triangleq\mathbf{\Phi}_{\sigma}^{(\mathrm{full})}\cup \mathbf{\Phi}_{\sigma}^{\mathrm{(com)}} \cup \mathbf{\Phi}_{\sigma}^{\mathrm{(exc)}}.
 \end{equation}
Finally, the partition of $\Adn$ into \(N'\) groups is described by 
\begin{equation}
\label{eq: full partition}
    \Adn= \bigcup\nolimits_{\sigma \in {[f] \choose d}}\mathbf{\Phi}_{\sigma}.
\end{equation}
We continue with the following lemma.

\begin{lemma}
\label{lem:lemma_pi_g>0}
For given \( n \), \( d\), and \(N\), the IC design in Case 2 (Section~\ref{subsec: Case 2}) achieves
\[
\pi \le s_0\cdot d +g.
\]
\end{lemma}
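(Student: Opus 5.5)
The plan is to bound $|\alpha(\mathbf{\Phi}_\sigma)|$ for every group $\sigma\in\binom{[f]}{d}$ by exhibiting one explicit file set that contains every $d$-tuple assigned to $\sigma$. The natural candidate is
\[
\mathcal{U}_\sigma \triangleq \Big(\bigcup\nolimits_{i\in\sigma}\mathcal{F}_i\Big)\cup\mathcal{E},
\]
whose cardinality is exactly $d\cdot s_0+g$. This holds because the families are disjoint, each has size $s_0$, and $\mathcal{E}$ is disjoint from $[n']$. Once we show $\alpha(\mathbf{\Phi}_\sigma)\subseteq\mathcal{U}_\sigma$ for all $\sigma$, we get $\pi=\max_\sigma|\alpha(\mathbf{\Phi}_\sigma)|\le s_0 d+g$.

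We verify the containment separately for the three components in \eqref{eq:SubfunctionCase1}.
\begin{itemize}
\item \emph{Full-support tuples.} For $\mathbf{a}\in\mathbf{\Phi}^{(\mathrm{full})}_\sigma$ we have $\mathcal{B}(\mathbf{a})=\sigma$ by \eqref{eqfull}. Every element of $\mathbf{a}$ lies in $[n']$, so it lies in some family, and that family must belong to $\sigma$. Hence $\mathbf{a}\subseteq\bigcup_{i\in\sigma}\mathcal{F}_i$.
\item \emph{Complement tuples.} For $\mathbf{a}\in\mathbf{C}_{\beta,\mathcal{I},\sigma}$ with $\mathcal{I}\subset\sigma$, we have $\mathcal{B}(\mathbf{a})=\mathcal{I}\subset\sigma$. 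By the same reasoning, $\mathbf{a}\subseteq\bigcup_{i\in\mathcal{I}}\mathcal{F}_i\subseteq\bigcup_{i\in\sigma}\mathcal{F}_i$.
\item \emph{Excluded tuples.} For $\mathbf{t}\in\mathbf{R}_{\beta,\mathcal{I},\sigma}$ with $\mathcal{I}\subset\sigma$, split $\mathbf{t}=(\mathbf{t}\cap[n'])\cup(\mathbf{t}\cap\mathcal{E})$. The second part lies in $\mathcal{E}$. The first part meets exactly the families in $\mathcal{B}(\mathbf{t})=\mathcal{I}\subseteq\sigma$, so it lies in $\bigcup_{i\in\sigma}\mathcal{F}_i$. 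This also covers $\beta=0$ (when $g\ge d$), where $\mathcal{I}=\emptyset$ and $\mathbf{t}\subseteq\mathcal{E}$.
\end{itemize}
Taking the union of the three cases gives the containment $\alpha(\mathbf{\Phi}_\sigma)\subseteq\mathcal{U}_\sigma$.

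The main obstacle is that the detailed construction of $\mathbf{C}_{\beta,\mathcal{I},\sigma}$ and $\mathbf{R}_{\beta,\mathcal{I},\sigma}$ is deferred to the appendices of \cite{IC}. The only property the argument needs is that each tuple of $\mathbf{C}_{\beta,\mathcal{I}}$ or $\mathbf{R}_{\beta,\mathcal{I}}$ is sent to some $\sigma\supset\mathcal{I}$. This is exactly the eligibility rule stated in the complement-tuple description and implicit in the indexing $\bigcup_{\mathcal{I}\subset\sigma}$ in \eqref{eq: phi_exc}, so we take it as given.

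We should also make sure the parameters are well defined, namely $g\ge 0$, so that $\mathcal{E}$ and the families make sense. This follows from $f s_0\le n$, which is part of the construction's well-definedness rather than of the bound itself. Finally, the bound is conservative: a group need not receive all of $\mathcal{E}$, but the inequality only requires an upper bound.
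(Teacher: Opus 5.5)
Your proposal is correct and follows the same argument as the paper. The paper simply observes that worker $\sigma$ receives the files of its $d$ families plus, at worst, all of $\mathcal{E}$; your component-by-component verification of $\alpha(\mathbf{\Phi}_\sigma)\subseteq\big(\bigcup_{i\in\sigma}\mathcal{F}_i\big)\cup\mathcal{E}$ spells this out, and the bound carries over to the $N$ final workers because each $\mathbf{\Phi}_b^{(b')}\subseteq\mathbf{\Phi}_b$ and $\alpha$ is monotone under inclusion.
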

\begin{IEEEproof}
The proof is direct by noting that worker $\sigma$ receives all files in families $\sigma$ plus, at worst case, the entire set $\mathcal{E}$. 
\end{IEEEproof}

\subsection{Extension of the Partition from \(N'\) Groups to \(N\) Groups}
\label{subsec: N' to N}
Recall (cf.~\eqref{eq: full partition}) that we have already partitioned \(\mathbf{A}_{n,d}\) into \(N' = \binom{f}{d}\) disjoint groups \(
 \mathbf{\Phi}_{\sigma_1},\dots,\mathbf{\Phi}_{\sigma_{N'}},
\ \  \sigma_1,\dots,\sigma_{N'}\in\binom{[f]}{d}.   
\)
We will now redistribute the \(d\)-tuples of these $N'$ groups across all existing $N$ groups. Towards this, let us assume that the indices \(\sigma_1,\dots,\sigma_{N'} \) are in lexicographic order and, in order to ease notation, let us rename the corresponding \(N'\) groups by their lexicographic position, as follows \(  \mathbf{\Phi}_1,\dots,\mathbf{\Phi}_{N'}  \)
where in particular, \(\mathbf{\Phi}_b=\mathbf{\Phi}_{\sigma_b}\) for \(b \in[N']\). 
Recalling that there are \(N\ge N'\) actual groups, let us first define the variables
\begin{equation}
\label{eq: q,p,r}
  q\triangleq\big\lfloor\frac{N}{N'}\big\rfloor,\qquad
p\triangleq\big\lceil\frac{N}{N'}\big\rceil,\qquad
r\triangleq N\bmod N'
\end{equation}
thus noting that \(
N=qN'+r,\) where \(p=q\ \text{if }r=0\), and  \(p=q+1\ \text{if }r>0.\) 
At this point, we proceed with the first step of dividing the \(d\)-tuple set of each of the first $N'$ groups into different parts, and then with the second step of redistributing some of these parts to fill up the empty $N-N'$ groups.
\paragraph*{Step 1 -- Dividing the \(d\)-tuples of each of the first $N'$ groups}   
For each \(b\in[N']\), we define the number of parts 
\begin{equation}
s_b\triangleq\begin{cases}
p & \text{if } 1\le b\le r,\\[4pt]
q & \text{if } r<b\le N'
\end{cases}
\end{equation}
and we split each \(\mathbf{\Phi}_b\) into \(s_b\) disjoint sub-parts using lexicographic ordering that yields slicing of equal sizes, plus or minus $1$, where we naturally keep track of the exact size of each sub-part. We denote these sub-parts by \(\mathbf{\Phi}_b^{(0)},\mathbf{\Phi}_b^{(1)},\dots,\mathbf{\Phi}_b^{(s_b-1)},\) where \(\mathbf{\Phi}_b=\bigcup_{b'=0}^{s_b-1}\mathbf{\Phi}_b^{(b')}. \)
\paragraph*{Step 2 -- Extending to \(N\) groups} 
We then relabel these sub-parts to obtain the desired \(N\) groups.  
We define the new \(N\) groups \(\mathbf{\Phi}_1,\dots,\mathbf{\Phi}_N\) by the indexing rule
\begin{equation}
\label{eq: extenstion}
  \mathbf{\Phi}_{\,b + b'N'} \;\triangleq\;\mathbf{\Phi}_b^{(b')},
\ \text{for } b\in[N'],\; b'\in\{0,\dots,s_b-1\}.  
\end{equation}
We conclude this section with the following lemma.
\begin{lemma} 
\label{lem:N,N'}
For any \(n,d,N\), the IC design uses \(N' = \binom{f}{d}\), where 
\(f = \max \left\{ r \in \mathbb{Z}^{+} \,\middle|\, \binom{r}{d} \le N \right\}\) (cf.~\eqref{eq: largest r}), and guarantees that
\[
\frac{N}{N'} < d+1 \le 2^d .
\]
\end{lemma}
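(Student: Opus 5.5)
By the maximality of $f$ in \eqref{eq: largest r}, we have $\binom{f}{d}\le N$ and $\binom{f+1}{d}>N$. Hence
\[
\frac{N}{N'}<\frac{\binom{f+1}{d}}{\binom{f}{d}}.
\]
The plan is therefore to bound this ratio of consecutive binomial coefficients by $d+1$, and then to check separately that $d+1\le 2^d$.

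Using $\binom{f+1}{d}=\frac{f+1}{f+1-d}\binom{f}{d}$, valid when $f\ge d$, the ratio equals $\frac{f+1}{f+1-d}$. This expression is decreasing in $f$. At its smallest admissible value $f=d$ it equals $d+1$, so $\frac{f+1}{f+1-d}\le d+1$ for every $f\ge d$, and thus $N/N'<d+1$.

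The main obstacle is the degenerate range $f<d$, where $\binom{f}{d}=0$ and the ratio above is undefined. I would handle it by noting that $\binom{r}{d}=0\le N$ for all $r<d$, and that $\binom{d}{d}=1\le N$ whenever $N\ge 1$. Hence $f\ge d$ always holds, and in fact $f$ is finite because $\binom{r}{d}\to\infty$ as $r$ grows. Thus $N'=\binom{f}{d}\ge 1$ and the ratio is well defined.

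Finally, $d+1\le 2^d$ follows from the binomial expansion $2^d=\sum_{i=0}^{d}\binom{d}{i}\ge \binom{d}{0}+\binom{d}{1}=1+d$ for $d\ge1$ (alternatively, by induction on $d$).
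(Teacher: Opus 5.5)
Your proof is correct and follows essentially the same route as the paper: maximality of $f$ gives $N<\binom{f+1}{d}$, the ratio $\binom{f+1}{d}/\binom{f}{d}=\frac{f+1}{f+1-d}$ is decreasing in $f$ and equals $d+1$ at $f=d$, and $d+1\le 2^d$ is elementary. Your explicit check that $f\ge d$ whenever $N\ge 1$ (so that $N'\ge 1$ and the ratio is well defined) is a small point the paper leaves implicit.
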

\begin{IEEEproof}
Directly from the definitions of \(f\) and \(N'\), we note that
\( N<\binom{f+1}{d}\) and \( N'=\binom{f}{d} \), and thus \(    \frac{N}{N'}<\frac{\binom{f+1}{d}}{\binom{f}{d}}=\frac{f+1}{f+1-d}. \)
Since the function \(\frac{x}{x - d}\) is decreasing on \([d+1, \infty)\), we conclude that 
\(\frac{N}{N'}\le\max_{f \ge d} \frac{f+1}{f+1-d} \le d+1\le 2^d.\)
\end{IEEEproof}

\section{A Lower Bound on \(\pi^\star\)}
\label{sec:lb}
A converse bound on the communication cost can be derived by observing that a worker with $\pi$ files can compute at most $\binom{\pi}{d}$ subfunctions. To cover all $\binom{n}{d}$ tasks with $N$ workers, we must have
\begin{equation}
\label{eq:lower0}
\binom{n}{d}\leq \sum_{b=1}^N \binom{|\alpha(\mathbf{\Phi}_b)|}{d} \leq N\binom{\pi}{d}
\end{equation}
since \(\pi =\max_{b\in [N]}|\alpha(\mathbf{\Phi}_b)| \).

Using the inequality $\frac{\pi-i}{n-i} \le \frac{\pi}{n} $ for \(1\le i\le d-1\) in \eqref{eq:lower0}, we obtain the following lower bound on \(\pi\)
\begin{equation}
    \label{eq:lower}
    \pi\geq \frac{n}{N^{1/d}}.
\end{equation}
For any possible \(\pi\), the \eqref{eq:lower} holds. Consequently, for the optimal \(\pi\), denoted by \(\pi^{\star}\), we have 
\begin{equation}
    \label{eq:lower_bound}
    \pi^{\star}\geq \frac{n}{N^{1/d}}.
\end{equation}
This lower bound represents the \emph{packing radius} of the hypergraph. While Steiner systems achieve this bound with equality (where $\pi$ is exactly the block size), their non-existence for most $N$ forces us to seek approximate designs that still respect this $N^{-1/d}$ scaling (please see Appendix B.A in \cite{IC} for more details).

\section{Proof of Theorem \ref{th: main}}
\label{Proof:Main}
From the achievable scheme discussed in Section~\ref{subsec: case1}, we have $\pi = s \cdot d = \frac{n}{f} \cdot d$. Using the simple bound $N'=\binom{f}{d} \le (\frac{e\cdot f}{d})^d$, we can conclude that $f \ge \frac{d}{e} N'^{1/d}$, which directly yields 
\begin{align}
\label{eq: prof,th1,1}
    \pi \le \frac{nd}{\frac{d}{e} N'^{1/d}}= \frac{N^{1/d}}{N'^{1/d}}\cdot\frac{nd}{\frac{d}{e} N^{1/d}} = \frac{2e\cdot n}{N^{1/d}}
\end{align}
where the last step follows from Lemma \ref{lem:N,N'}.  Then from Lemma~\ref{lem:lemma_pi_g>0}, we conclude that $\pi = s_0 \cdot d+g$. Similarly, we can show that
\begin{equation}
\label{eq: prof, th1,3}
  s_0\cdot d\le \frac{2e\cdot n}{N^{1/d}}.  
\end{equation}
This, combined with $g = n - s_0 \cdot f$, directly yields
\begin{align}
    g = n - \left( \left\lfloor \frac{n}{f+ d} \right\rfloor + 1 \right) \cdot f 
\leq n - \left\lceil \frac{n}{f + d} \right\rceil \cdot f \\
\leq n - \frac{n}{f + d} \cdot f
= \frac{n \cdot d}{f + d}
\end{align}
and since \( \frac{n}{f + d} < \left\lfloor \frac{n}{f + d} \right\rfloor + 1 = s_0 \), we can directly conclude that 
\(g \leq \frac{n \cdot d}{f + d} \leq s_0 \cdot d.\) Combining this with \eqref{eq: prof, th1,3}, we get $\pi\le \frac{4e \cdot n}{N^{1/d}} $. Finally, applying the converse in \eqref{eq:lower_bound} shows that for any \((n,d,N \le (\tfrac{9}{10}\sqrt{\frac{n}{d}})^d)\), we have \(\pi/\pi^\star \le 4e\), and hence the scaling law \(\pi \asymp n/N^{1/d}\) is optimal.

\section{Comparison with Steiner Systems}
The advantage of the IC design over Steiner systems lies in its flexibility with respect to column sizes. A Steiner system \(S(d,\pi,n)\) requires the number of blocks \(\binom{n}{d}/\binom{\pi}{d}\) to be exactly equal to \(N\), which severely restricts its applicability. In contrast, the IC design fixes \(N\) and \(n\), and then effectively determines the optimal clique size \(f\) (and hence \(\pi\)) compatible with the available workers. By allowing file assignments to overlap according to a family-based interlaced clique structure, rather than a rigid block structure, the IC design guarantees the existence of a valid partition for any \(N \le (\tfrac{9}{10}\sqrt{\frac{n}{d}})^d\).
As we have discussed, while a Steiner system would achieve \(\pi \asymp n/N^{1/d}\) with \(\delta=1\), it may not exist for a given \(N\). The IC design attains the same communication scaling with \(\delta \le 4\), offering a practical trade-off that represents a controlled increase in computation imbalance in exchange for universal applicability.

\section{Conclusion}
\label{sec:conclusion}

This paper addressed a fundamental file and task allocation problem in distributed computing. We highlighted the theoretical optimality of Steiner systems while exposing their practical limitations due to sparsity. The proposed Interweaved-Cliques (IC) design was shown to bridge this gap, offering a deterministic and universally applicable allocation scheme. By achieving order-optimal communication cost $\pi \asymp n/N^{1/d}$ and bounded computation balance, the IC design provides a robust solution for deploying large-scale distributed function evaluations without the rigid constraints of classical combinatorial designs.

\IEEEtriggeratref{21}
\bibliographystyle{IEEEtran}
\bibliography{references_}




\end{document}